\documentclass[10pt]{amsart}
\usepackage{amsmath}
  \usepackage{graphics} 
\usepackage{graphicx}  
  \usepackage{epsfig} 
 \usepackage[colorlinks=true]{hyperref}
  \usepackage{float}
  \usepackage{caption}
\usepackage{amsmath}
\usepackage{wrapfig}
\usepackage{graphicx}
\usepackage{subcaption}
\usepackage{amssymb}
\usepackage{diagbox} 
\usepackage{booktabs}

\usepackage{color}
\usepackage{enumitem} 
\usepackage{comment}

\setcounter{topnumber}{2}

\setcounter{bottomnumber}{2}
\setcounter{totalnumber}{4}

\newtheorem{theorem}{Theorem}[section]

\theoremstyle{definition}

\newtheorem{remark}{Remark}

\newcommand{\beq}{\begin{equation*}}
\newcommand{\eeq}{\end{equation*}}
\newcommand{\beqn}{\begin{eqnarray*}}
\newcommand{\eeqn}{\end{eqnarray*}}

\title[TYC-SIT Strategy] 
      {Combined Trojan Y Chromosome Strategy and Sterile Insect Technique to Eliminate Mosquitoes: Modelling and Analysis}



\subjclass{Primary: 34C11, 34C23, 49J15; Secondary: 92D25, 92D40}



\begin{document}
\maketitle

\centerline{
\scshape  Jingjing Lyu$^{1,2,*}$,  Musong Gu$^{1}$ and Sheng Wang$^{3}$}

\medskip
{\footnotesize
 \leftline{$^{1}$ College Of Computer Science, Chengdu University, Chengdu,  China}
   \medskip
   \leftline{$^{2}$ Key Laboratory of Pattern Recognition and Intelligent Information Processing, Institutions}
 \leftline{\quad  of Higher Education of Sichuan Province, Chengdu University, Chengdu, China}
    \medskip
\leftline{$^{3}$ Information Development and Management Center, Chuzhou University, Chuzhou, China}
\medskip 
\leftline{$^{*}$ Correspondence: lvjingjing@cdu.edu.cn}
 }

\noindent
\quad \\


\begin{abstract}
Sterile insect technique has been successfully applied in the control of agricultural pests, however, it has a limited ability to control mosquitoes. A promising alternative approach is Trojan Y Chromosome strategy, which works by manipulating the sex ratio of a population through the introduction of feminized $YY$ supermales that guarantee male offspring. A combined Trojan Y chromosome strategy and sterile insect technique (TYC-SIT) strategy is modeled with ordinary differential equations that allow the kinetics of the female population decline of mosquitoes to be evaluated under identical modeling conditions. The dynamical analysis leads to results on both local and global stabilities of this combined  model. Optimal control analysis is also implemented to investigate the optimal mechanisms for extinction of mosquitoes. In particular, the numerical results affirm that the combined TYC-SIT enables near elimination of mosquitoes. The conclusion has great significance for pest controls.\\
\quad \\
{\bf Keywords:} Sterile insect technique; Trojan Y chromosome strategy; equilibrium; Stability analysis; Optimal control; Extinction
\end{abstract}
\quad \\
\quad \\

\section{Introduction}

Mosquito-borne diseases are transmitted by mosquitoes infected with viruses, such as Zika virus, yellow fever virus, West Nile fever virus, and dengue fever virus \cite{L6}. The spread of mosquito-borne virus in humans is mainly through the bite of mosquitoes infected with the virus like Anopheles sinensis, Anopheles anthropophagus, Aedes aegypti, Aedes albopictus, Culex mosquitoes, and etc. \cite{L7,L8,L9}. Chemical treatments such as pesiticides have been implemented for many years to eradicate mosquitoes. However, environmental problems caused by excessive use of pesticides \cite{L11}, insecticide resistance \cite{L12,L13}, and combined with the lack of vaccines \cite{L14,L15}, have called for alternative environment-friendly and sustainable approaches \cite{L1}, such as radaition-based strerile insect technique (SIT) and Trojan Y chromosome strategy (TYC).

SIT works by realeasing radiation-sterilized males to an existing population, to mate with wild females so that they have no viable offspring \cite{L0,L1}. SIT has been sucessfully applied in the control of several agricultural pests such as invasive fruit flies, lepidopteran and Hemiptera \cite{L16,L17,L18}, however, it has a limited ability to control mosquitoes due to the decline in their mating competitiveness and survival \cite{L2, L3, L3-1, L3-2}. A promosing alternative approach that has been proposed for eliminating mosquitoes is TYC, in which released feminized supermales (containing two Y chromosomes) into the field to mate with wild females, resulting in a sharp sex imbalance of subsequent generations \cite{L4GutierrezTeem, L4-1,L4-2,L4-3}. Figure~\ref{fig-L1} illustrates that an equal proportion of females and males are produced in the wild, however, the offspring is guaranteed to be male when a natural female mates with a feminized $YY$ supermale. The gradual reduction in females may lead to eventual extinction of the targeted population after several generation cycles \cite{L5}. The TYC strategy is safe because it is revesible and has the advantage of targeting a specific species, thus preserving other beneficial species and avoiding non-target effects. Furthermore, no genetically engineered genes can be transferred to subsequent generations. Also, the strength of the effect can be controlled because we can decide how many supermales to be introduced to the population.

\begin{figure}[H]	
\includegraphics[width=12 cm]{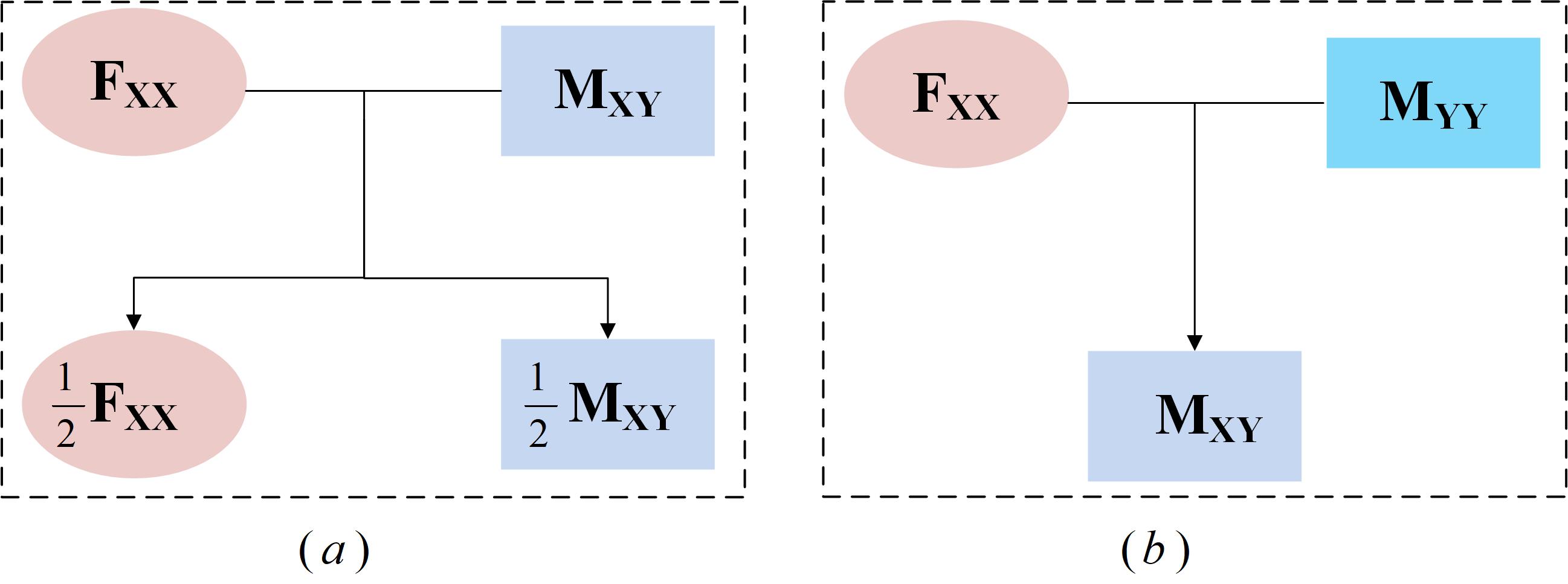}
\caption{Mating pedigrees of TYC. (\textbf{a}) Mating of a wild-type $XX$ female ($F_{XX}$) and a wild-type $XY$ male ($M_{XY}$); (\textbf{b}) Mating of a wild-type $XX$ female ($F_{XX}$) and a feminized $YY$ supermale ($M_{YY}$). \label{fig-L1}}
\end{figure} 

In this manuscript, we will compare the TYC to SIT from both dynamics and optimal control aspects. To the best of our knowledge, it is the first mathematical investigation to compare theoretical TYC to SIT.

\section{Materials and Methods}

\subsection{Mathenatical Modelling}
The population of mosquitoes in the wild tends to be numerous because their reproduction rates are high and matings occur constantly \cite{AJ}, therefore, continuous models of ordinary differential equations (ODEs) can be established to describe the population dynamics of mosquitoes. Parameters will be used in this work are firstly explained in Table \ref{Lyu-tab1}.

\begin{table}[H] 
\caption{List of Parameters. All parameters are non-negative.
\label{Lyu-tab1}}
\begin{tabular}{cc}
\toprule
\textbf{Parameter}	& \textbf{Description}	\\
\midrule
$\beta$		&Birth coefficient proportional to the viability of progeny			\\
$\delta$		&Death coefficient propotional to the death from predators, illness, etc.		\\
$L$		&Logistic term (to limit the size of the population) \\
$K$		&Carrying capacity of the ecosystem	\\
$\mu_1$		&Constant influx of radiation-based sterile males		\\
$\mu_2$		&Constant influx of feminized $YY$ supermales		\\
\bottomrule
\end{tabular}
\end{table}

The TYC-SIT model, in which both feminized $YY$ supermales and radiation-based sterile males are introduced, is described by a system of four ODEs for state variables: a wild-type $XX$ female ($f$), a wild-type $XY$ male ($m$), a sterile male ($s_1$), and a feminized $YY$ supermale ($s_2$). Half female and half male offspring are produced in the wild, however, only males can be produced if a wild female mates with a feminized $YY$ supermale, and no viable offspring can be produced if mates with a sterile male. Thusly, the set of equations that describe the system is:

\begin{equation}
\label{TYC-SIT}
\begin{split}
& \frac{df}{dt}=\frac{1}{2}fm\beta L \frac{m}{m+s_1+s_2}-\delta f,\\
& \frac{dm}{dt}=\frac{1}{2}fm\beta L \frac{m}{m+s_1+s_2}+f s_2 \beta L \frac{s_2}{m+s_1+s_2}-\delta m,\\
& \frac{ds_1}{dt}=\mu_1-\delta s_1,\\
& \frac{ds_2}{dt}=\mu_2-\delta s_2,
\end{split}
\end{equation}
where $f, m, s_1, s_2$ define the number of individuals in each associated class, and 

\begin{equation}
\label{jj_1}
L=1-\frac{f+m+s_1+s_2}{K}.
\end{equation}

The intraspecies competition for female mates caused by the introduction of feminized $YY$ supermales and radiation-based sterile males is considered and modeled as 

\begin{equation}
\frac{m}{m+s_1+s_2} \quad \mbox{and} \quad \frac{s_2}{m+s_1+s_2}. 
\end{equation}
If $s_1=s_2=0$, there is no mating pressure on wild males ($\frac{m}{m+s_1+s_2}=1$). Obviously, the range of $\frac{m}{m+s_1+s_2}$ and $\frac{s_2}{m+s_1+s_2}$ is between 0 and 1. The larger the value of $\frac{m}{m+s_1+s_2}$ is, the more competitive wild males are. Similarly, the larger the value of $\frac{s_2}{m+s_1+s_2}$ is, the more competitive feminized $YY$ supermales are.

\subsection{Equilibrium and Stability Analysis}

The stability of the TYC-SIT model \eqref{TYC-SIT} is now investigated. There is one equilibrium, $(0,0, \frac{\mu_1}{\delta},\frac{\mu_2}{\delta})$, on the boundary. It's clear $(0,0, \frac{\mu_1}{\delta},\frac{\mu_2}{\delta})$ is global stable. To get the positive interior equilibrium explicitly, i.e., $(f^*,m^*,s_1^*,s_2^*)$, it is equivalent to solve the following equations:

\begin{equation}
\label{E-TYC_SIT_1}
\begin{split}
& \frac{1}{2}f^* m^* \beta L^* \frac{m^*}{m^*+s_1^*+s_2^*}-\delta f^*=0,\\
& \frac{1}{2}f^* m^* \beta L^* \frac{m^*}{m^*+s_1^*+s_2^*}+f s_2^* \beta L^* \frac{s_2^*}{m^*+s_1^*+s_2^*}-\delta m^*=0,\\
& \mu_1-\delta s_1^*=0,\\
& \mu_2-\delta s_2^*=0,\\
&L^*=1-\frac{f^*+m^*+s_1^*+s_2^*}{K}.
\end{split}
\end{equation}
By solving these equations, we have 

\begin{equation}
\label{E-TYC_SIT_2}
\begin{split}
f^* &=K-(m^*+s_1^*+s_2^*)-\frac{2\delta K (m^*+s_1^*+s_2^*)}{{m^*}^2\beta}
, \\
s^*_1&=\frac{\mu_1}{\delta},\\
s^*_2&=\frac{\mu_2}{\delta},
\end{split}
\end{equation}
and $m^*$ can be calculated from 

\begin{equation}
\begin{split}
& 2\beta {m^*}^5+\beta \left(s_1+s_2-K \right){m^*}^4+2 \left(K\delta+\beta s_2^2 \right){m^*}^3+\\
& \left[2\delta K(s_1+s_2)+2\beta s_2^2(s_1+s_2-1) \right]{m^*}^2+4\delta K *s_2^2 m^* +4K\delta s_2^2 \left(s_1+s_2 \right)=0.
\end{split}
\end{equation}
The Jacobian matrix of the model \eqref{TYC-SIT} about $(f^*, m^*,s_1^*,s_2^*)$ is given as 

\begin{equation} \label{Jac_1} 
\mathrm{J}=\left[ \begin{array}{cccc}
J_ {11}& J_{12}  & J_{13} & J_{14} \\ 
J_{21} & J_{22} & J_{23}  & J_{24}\\ 
0 & 0 & J_{33} & 0  \\
0 & 0 &  0 & J_{44} \end{array}
\right] 
\end{equation} 
where

\begin{equation}
\begin{split}
& J_{11} = \frac{\beta K {m^*}^2 L^*-\beta f^* {m^*}^2}{2K \left(m^*+s_1^*+s_2^* \right)}-\delta \quad, \\
& J_{12}=\frac{\left(2KL^*-m^* \right)\beta f^* m^*}{2K \left(m^* + s_1^* + s_2^* \right)} - \frac{\beta f^* {m^*}^2 L^*}{2* \left(m^* + s_1^* + s_2^*\right)^2} \quad ,\\
& J_{13} = -\frac{\beta f^* {m^*}^2 L^*}{2 \left(m^* + s_1^* + s_2^*\right)^2} - \frac{\beta f^* {m^*}^2}{2K \left(m^* + s_1^* + s_2^* \right)} \quad, \\
& J_{14}=-\frac{\beta f^* {m^*}^2 L^*}{2 \left(m^* + s_1^* + s_2^* \right)^2} - \frac{\beta f^* {m^*}^2}{2K \left(m^* + s_1^* + s_2^* \right)} \quad, \\
& J_{21} =  \frac{\beta L^* \left({m^*}^2+2{s_2^*}^2 \right)}{2 \left(m^* + s_1^* + s_2^* \right)}  - \frac{\beta f^*\left({m^*}^2+2{s_2^*}^2 \right)}{2K \left(m^* + s_1^* + s_2^*\right)} \quad, \\
& J_{22} =-\frac{\beta f^* L^* \left({m^*}^2+2{s_2^*}^2 \right)}{2\left(m^* + s_1^* + s_2^* \right)^2} + \frac{\beta f^* \left(2K m^* L^*-{m^*}^2-2 {s_2^*}^2 \right)}{2 K \left(m^* + s_1^* + s_2^*\right)} -\delta \quad, \\
& J_{23} =-\frac{\beta f^* L^* \left({m^*}^2+2{s_2^*}^2\right)}{2\left(m^* + s_1^* + s_2^*\right)^2}  - \frac{b*f\left({m^*}^2+2{s_2^*}^2\right)}{2K\left(m^* + s_1^* + s_2^*\right)} \quad, \\
& J_{24}= -\frac{\beta f^* L^* \left({m^*}^2+2{s_2^*}^2\right)}{2\left(m^* + s_1^* + s_2^*\right)^2}  +  \frac{\beta f^* \left(4Ks_2^* L^*-{m^*}^2-2{s_2}^*  \right)}{2K \left(m^* + s_1^* + s_2^*\right)} \quad,\\
& J_{33} =-\delta,\\
& J_{44} =-\delta.\\
\end{split}
\end{equation}
The corresponding characteristic equation is

\begin{equation}
\lambda^4+R_{1}\lambda^3+R_{2} \lambda^2+ R_{3} \lambda+ R_{4}=0
\end{equation}
where

\begin{equation}
\label{jj_17}
\begin{split}
& R_{1}=-(J_{11}+J_{22}+J_{33}+J_{44}),\\
& R_{2}=J_{11} J_{22}+J_{11}J_{33}+J_{11}J_{44}+J_{22}J_{33}+J_{22}J_{44}+J_{33}J_{44}-J_{12}J_{21},\\
& R_{3}=(J_{33}+J_{44})J_{12}J_{21}-(J_{11}+J_{22})J_{33}J_{44},\\
& R_{4}=J_{11}J_{22}J_{33}J_{44}-J_{12}J_{21}J_{33}J_{44}.
\end{split}
\end{equation}


\begin{theorem}
\label{thm_1}
Let $\mu>0$. The interior equilibrium $(f^*, m^*,s_1^*,s_2^*)$ is locally asymptotically stable if 
\begin{equation}
R_{1}>0,~ R_{3}>0,~ R_{4}>0,,~ R_{1}R_{2}R_{3}-A_{0}>R_3^2+R_1^2 R_4.
\end{equation}
\end{theorem}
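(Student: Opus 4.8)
\emph{Proof plan.} The plan is to prove local asymptotic stability by linearization: under the stated hypotheses the equilibrium is hyperbolic, so it suffices to show that every eigenvalue of the Jacobian $\mathrm{J}$ at $(f^*,m^*,s_1^*,s_2^*)$ has negative real part. These eigenvalues are exactly the roots of the characteristic polynomial $p(\lambda)=\lambda^4+R_1\lambda^3+R_2\lambda^2+R_3\lambda+R_4$ with $R_1,\dots,R_4$ given in \eqref{jj_17}, so the whole problem reduces to a root-location question for a single real quartic, and the statement then follows from Lyapunov's linearization theorem.

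First I would invoke the Routh--Hurwitz criterion for a degree-four polynomial. For $p(\lambda)$ with leading coefficient $1$, all roots lie in the open left half-plane if and only if
\begin{equation*}
R_1>0,\qquad R_3>0,\qquad R_4>0,\qquad R_1R_2R_3>R_3^2+R_1^2R_4 .
\end{equation*}
These are precisely the hypotheses of the theorem: the term $A_0$ in the stated inequality should be read as $0$, so that the last condition is the Hurwitz determinant inequality $\Delta_3=R_3(R_1R_2-R_3)-R_1^2R_4>0$. I would also record the small observation that, given $R_3>0$ and $R_4>0$, the inequality $\Delta_3>0$ automatically forces $R_1R_2-R_3>0$, so the three listed one-sided conditions together with $\Delta_3>0$ do recover the full Hurwitz system and no minor is missed.

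A cleaner route, which I would use to cross-check the $R_i$, exploits the block structure of $\mathrm{J}$: the lower-left $2\times2$ block vanishes and the lower-right block is $\operatorname{diag}(J_{33},J_{44})=\operatorname{diag}(-\delta,-\delta)$, so $\mathrm{J}$ is block upper-triangular. Hence
\begin{equation*}
p(\lambda)=(\lambda+\delta)^2\Bigl[\lambda^2-(J_{11}+J_{22})\lambda+(J_{11}J_{22}-J_{12}J_{21})\Bigr],
\end{equation*}
giving two eigenvalues equal to $-\delta<0$ and two governed by the trace and determinant of the upper-left block. Local asymptotic stability is therefore equivalent to the two compact conditions $J_{11}+J_{22}<0$ and $J_{11}J_{22}-J_{12}J_{21}>0$, and expanding the product above recovers $R_1,\dots,R_4$, showing that the quartic Routh--Hurwitz conditions collapse to these two. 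The standing requirement $\delta>0$ (so the factor $-\delta$ is genuinely stable) is forced by $\mu>0$, since $s_1^*=\mu_1/\delta$ and $s_2^*=\mu_2/\delta$ must be finite.

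I expect the only real work to be bookkeeping rather than a genuine obstacle: correctly assembling $R_1,\dots,R_4$ from the lengthy entries $J_{ij}$ and confirming that the stated inequality is indeed $\Delta_3>0$ in disguise. The one point I would pin down carefully beforehand is admissibility of the interior equilibrium, namely $f^*>0$ in \eqref{E-TYC_SIT_2} together with the existence of a positive real root $m^*$ of the quintic; without this, linearizing at $(f^*,m^*,s_1^*,s_2^*)$ is not meaningful. I would record this positivity as a prerequisite and then apply the Routh--Hurwitz machinery to finish.
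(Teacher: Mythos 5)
Your proposal is correct, and its backbone --- the quartic Routh--Hurwitz test with the undefined $A_0$ read as $0$ --- is exactly the paper's argument; the paper's entire proof is the single sentence ``It follows from Routh Hurwitz stability criteria.'' What you add beyond that one-liner is genuinely different and sharper. Since the lower-left $2\times 2$ block of $\mathrm{J}$ vanishes and $J_{33}=J_{44}=-\delta$, the characteristic polynomial factors as
\begin{equation*}
p(\lambda)=(\lambda+\delta)^{2}\left[\lambda^{2}-(J_{11}+J_{22})\lambda+\bigl(J_{11}J_{22}-J_{12}J_{21}\bigr)\right],
\end{equation*}
so two eigenvalues are exactly $-\delta<0$ and local asymptotic stability is equivalent to just the two conditions $J_{11}+J_{22}<0$ and $J_{11}J_{22}-J_{12}J_{21}>0$; the four quartic Routh--Hurwitz inequalities become a detour rather than a necessity. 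Your factorization also serves as a cross-check that the paper's proof does not provide: expanding the product gives
\begin{equation*}
R_{3}=(J_{33}+J_{44})J_{12}J_{21}-(J_{11}+J_{22})J_{33}J_{44}-(J_{33}+J_{44})J_{11}J_{22},
\end{equation*}
so the formula for $R_3$ in \eqref{jj_17} is missing its last term (the printed $R_1$, $R_2$, $R_4$ are correct); your expansion therefore would not literally ``recover'' the printed $R_3$, and the theorem's hypotheses should be understood with the true coefficients of the characteristic polynomial. Your remaining observations --- that $R_1R_2-R_3>0$ is automatically forced by $R_3>0$, $R_4>0$ and the determinant inequality, so no Hurwitz minor is overlooked, and that existence and positivity of $(f^*,m^*)$ must be assumed for the linearization to be meaningful --- are correct and fill gaps the paper leaves implicit.
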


\begin{proof}
It follows from Routh Hurwitz stability criteria. 
\end{proof}

\begin{theorem}
\label{thm_2}
In the case of $\mu_1=\mu_2=0$, the trivial equilibrium $(0,0,0,0)$ of TYC-SIT model \eqref{TYC-SIT} is globally asymptotically stable if $\beta<\frac{\delta}{K}$.
\end{theorem}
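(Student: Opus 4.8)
The plan is to use the total population $N := f + m + s_1 + s_2$ as a Lyapunov function on the biologically feasible region $\Omega := \{(f,m,s_1,s_2): f,m,s_1,s_2 \ge 0,\ N \le K\}$ and to show that, when $\beta < \delta/K$, $N$ decays exponentially to $0$, which forces every coordinate to vanish. Before computing anything I would record some structural facts. With $\mu_1 = \mu_2 = 0$ the last two equations decouple, giving $s_1(t) = s_1(0)e^{-\delta t}$ and $s_2(t) = s_2(0)e^{-\delta t}$, so $s_1,s_2 \to 0$ exponentially and stay nonnegative. I would then verify that $\Omega$ is positively invariant: on the face $f=0$ one has $\dot f = 0$; on $m=0$ one has $\dot m = \beta f L\, s_2^2/(s_1+s_2) \ge 0$ because $L \ge 0$ throughout $\Omega$; and on the face $N = K$ (where $L = 0$) all birth terms vanish, so $\dot N = -\delta K < 0$. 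Hence trajectories starting in $\Omega$ remain there, the competition fractions stay in $[0,1]$, and $L$ stays in $[0,1]$.

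Next I would differentiate $N$ along solutions. Summing the four equations, the death terms contribute $-\delta N$ and the two half-birth terms combine into a single term, yielding
\[
\dot N = \beta f L\,\frac{m^2 + s_2^2}{m+s_1+s_2} - \delta N.
\]
The key elementary estimate is that $m \le m+s_1+s_2$ and $s_2 \le m+s_1+s_2$, so $m^2 + s_2^2 \le (m+s_2)(m+s_1+s_2)$ and therefore $\frac{m^2+s_2^2}{m+s_1+s_2} \le m+s_2 \le N$ (extended by continuity to the origin, where both sides vanish). Since $f \ge 0$, $0 \le L \le 1$, and $f \le N \le K$ on $\Omega$, one has $\beta f L \ge 0$ and $\beta f L \le \beta K$, so
\[
\dot N \le \beta f L\cdot N - \delta N \le (\beta K - \delta)\,N.
\]
Under the hypothesis $\beta < \delta/K$ the constant $c := \delta - \beta K$ is positive, whence $\dot N \le -c\,N$ and a Gr\"onwall/comparison argument gives $N(t) \le N(0)\,e^{-ct} \to 0$.

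Because $f,m,s_1,s_2 \ge 0$ and their sum tends to $0$, each coordinate tends to $0$; since $N$ is positive definite on $\Omega$ with negative definite derivative, Lyapunov's direct method yields global asymptotic stability of $(0,0,0,0)$ on $\Omega$, with exponential rate. The only genuinely delicate point is the behaviour of the quotient $\frac{m^2+s_2^2}{m+s_1+s_2}$ as $(m,s_1,s_2)\to 0$: I would continuously extend it by $0$ there and note that the inequality $\frac{m^2+s_2^2}{m+s_1+s_2}\le N$ holds uniformly, so the Lyapunov estimate is unaffected. Establishing the invariance of $\Omega$ (so that $L \ge 0$ and the competition fractions remain well defined) is the main prerequisite; once it is in place, the remainder is the short computation above.
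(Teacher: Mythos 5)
Your proof is correct and rests on the same core idea as the paper's: take the total population $V = f+m+s_1+s_2$ as a Lyapunov function and show that $\beta < \delta/K$ makes its derivative negative along trajectories. The difference is in the bookkeeping, and yours is tighter in a way that matters. The paper bounds the birth terms by $f\beta(m+s_2)$, then discards $-\delta(f+s_1)$ to arrive at $V' \le (K\beta-\delta)(m+s_2)$; that bound is only negative \emph{semi}-definite, since it vanishes on the set $\{m=s_2=0\}$ even when $f$ or $s_1$ is positive, so the paper's displayed inequality does not by itself give $V'<0$ at every nonzero point (one would need to retain the dropped terms, or appeal to LaSalle's invariance principle). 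You instead keep the full $-\delta N$ term, combine the two birth terms into $\beta f L\,(m^2+s_2^2)/(m+s_1+s_2)$, and use $m^2+s_2^2 \le (m+s_2)(m+s_1+s_2)$ to obtain $\dot N \le (\beta K-\delta)N$, a genuinely negative-definite bound that yields exponential decay $N(t)\le N(0)e^{-(\delta-\beta K)t}$ and hence convergence of every coordinate. You also prove positive invariance of $\Omega=\{f,m,s_1,s_2\ge 0,\ N\le K\}$, which is what justifies $0\le L\le 1$ and $f\le K$ --- facts the paper uses silently (``positivity of its solutions'' and the step $f\beta(m+s_2)\le K\beta(m+s_2)$ are asserted without argument), and you handle the removable singularity of the mating fractions at $m=s_1=s_2=0$ explicitly. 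In short: same route as the paper, but your execution closes the loose ends in its proof.
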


\begin{proof} 
Consider the Lyapunov function $V(f,m,s_1,s_2)=f+m+s_1+s_2$. Note that $V(f,m,s_1,s_2) \geq 0$ due to the positivity of its solutions and $V(f,m,x)$ is radially bounded. It is left to show that $V'(f,m,s_1,s_2)<0$ for all $(f,m,s_1,s_2)\not = (0,0,0,0)$. Taking the derivative of $V(f,m,s_1,s_2)$ about $t$ yields

\begin{equation}
\begin{split}
V'(f,m,s_1,s_2)&=\frac{dV}{dt}=\frac{df}{dt}+\frac{dm}{dt}+\frac{ds_1}{dt}+\frac{ds_2}{dt}\\
&=fm\beta L \frac{m}{m+s_1+s_2}+fs_2\beta L \frac{s_2}{m+s_1+s_2}-\delta f -\delta m -\delta s_1 -\delta s_2\\
&\leq fm\beta +f s_2 \beta-\delta(f+m+s_1+s_2) \\
&(since\quad  L \leq 1, \frac{m}{m+s_1+s_2} \leq 1\quad \& \quad \frac{s_2}{m+s_1+s_2}\leq 1)\\
&\leq fm\beta+f s_2 \beta-\delta(m+s_2)\\
&= f\beta(m+s_2)-\delta(m+s_2)\\
&\leq K\beta(m+s_2)-\delta(m+s_2)\\
&\leq \left(K\beta-\delta \right)\left(m+s_2\right)
\end{split}
\end{equation}
Therefore, we only need to show $K\beta-\delta<0$ to get global stability, and this requires $\beta<\frac{\delta}{K}$, which completes the proof. 
\end{proof}
\subsection{Optimal Control Analysis}

The goal in this subsection is to investigate the mechanisms in TYC-SIT system to lead to an optimal level of  female mosquitoes density. We assume that the influx of sterile males and feminized $YY$ supermales $\mu_1$ and $\mu_2$ are not known a priori and enter them into the system as time-dependent controls $\mu_1(t)$ and $\mu_2(t)$. In fact, male mosquitoes do not bite human and only female mosquitoes bite and spread diseases \cite{L10,L20,L21}, which implies that we don't have to kill all mosquitoes, eliminating females is enough instead. Furthermore,  We also hope the production of sterile males and feminized $YY$ supermales is minimized. Herein, the following objective function is chosen:

\begin{equation}
\label{objective}
Obj(\mu_1(t),\mu_2(t))=\int_{0}^{T} \frac{1}{2}f^2+ \frac{1}{2} \mu_1(t)^2+\frac{1}{2} \mu_2(t)^2  dt
\end{equation}
where $f, m,s_1,s_2$ subject to the governing equations \eqref{TYC-SIT}. Optimal strategies are derived from the objective function, where the female population is minimized and also the introduction of both sterile males and YY supermales are minimized. We search for the optimal controls within the set $U$, which is given by 

\begin{equation}
 U=\{(\mu_1,\mu_2)|\mu_1, \mu_2 \mbox{ are measurable}, 0\leq \mu_1(t), 0 \leq \mu_2(t), t \in [0,150] \}.
\end{equation}
The goal is to find the optimal controls $(\mu_1^*(t),\mu_2^*(t))$ such that

\begin{equation}
\label{min-opt}
Obj(\mu_1^*(t),\mu_2^*(t))=\underset{\mu_1(t),\mu_2(t)} {\min} \int_{0}^{T} \frac{1}{2}f^2+ \frac{1}{2} \mu_1(t)^2+\frac{1}{2} \mu_2 (t)^2 dt
\end{equation}

The existence of $\mu_1^*(t),\mu_2^*(t)$ for minimization problem \eqref{min-opt} is guaranteed in the literatures \cite{Lit_1,Lit_2,Lit_3}.

\begin{theorem}
An optimal control $(\mu_1^*(t),\mu_2^*(t))\in U$ of the system \eqref{TYC-SIT} that minimizes the objective function $Obj$ is characterized by 
$\left(\mu_1^*(t),\mu_2^*(t)\right)=\left(\underset{t} {\max}\{0,-\lambda_3\}, \underset{t} {\max}\{0,-\lambda_4\}\right)$.
\end{theorem}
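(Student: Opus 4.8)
The plan is to apply Pontryagin's Minimum Principle, since the existence of a minimizer over $U$ has already been granted by the cited results. First I would form the Hamiltonian by adjoining the four state equations of \eqref{TYC-SIT} to the running cost with costate variables $\lambda_1,\lambda_2,\lambda_3,\lambda_4$, namely
\begin{equation*}
H=\tfrac{1}{2}f^2+\tfrac{1}{2}\mu_1^2+\tfrac{1}{2}\mu_2^2+\lambda_1\dot f+\lambda_2\dot m+\lambda_3\dot s_1+\lambda_4\dot s_2,
\end{equation*}
where $\dot f,\dot m,\dot s_1,\dot s_2$ denote the right-hand sides of \eqref{TYC-SIT}. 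The key structural observation is that the controls enter only through the last two terms: $\mu_1$ appears solely as $\tfrac{1}{2}\mu_1^2+\lambda_3\mu_1$ (coming from $\lambda_3(\mu_1-\delta s_1)$) and $\mu_2$ solely as $\tfrac{1}{2}\mu_2^2+\lambda_4\mu_2$.

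Next I would record the adjoint system $\dot\lambda_i=-\partial H/\partial x_i$ (with $x_1=f,x_2=m,x_3=s_1,x_4=s_2$) together with the transversality conditions $\lambda_i(T)=0$, which hold because $Obj$ carries no terminal payoff. The equations for $\lambda_3,\lambda_4$ couple to the state dynamics only through the decay $-\delta s_i$ and the nonlinear mating/competition terms. Writing them out explicitly is pure bookkeeping; crucially, the stated characterization does not require their closed form, only their existence and continuity along the optimal trajectory, both of which follow from the smoothness of the vector field away from the coordinate where $m+s_1+s_2=0$.

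The core computation is the pointwise minimization of $H$ over $(\mu_1,\mu_2)\in U$. Since $H$ is strictly convex and separable in each control ($\partial^2 H/\partial\mu_i^2=1>0$), the unconstrained stationarity conditions $\partial H/\partial\mu_1=\mu_1+\lambda_3=0$ and $\partial H/\partial\mu_2=\mu_2+\lambda_4=0$ give interior minimizers $\mu_1=-\lambda_3$ and $\mu_2=-\lambda_4$. Projecting these onto the admissible set $\{\mu_i\ge 0\}$ — that is, taking the nearest feasible point of a convex parabola — yields precisely $\mu_1^*=\max\{0,-\lambda_3\}$ and $\mu_2^*=\max\{0,-\lambda_4\}$, which is the claimed characterization.

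I expect the only genuine care-point (rather than a true obstacle) to be the handling of the one-sided constraint $\mu_i\ge 0$: one must verify that whenever $\lambda_3>0$ the constrained minimum sits at the boundary $\mu_1=0$ rather than at the negative interior stationary point, and symmetrically for $\mu_2$ and $\lambda_4$. Everything else is a direct transcription of the minimum principle; the heaviest algebra lives in the $\lambda_1,\lambda_2$ adjoint equations arising from differentiating the products $f m\,\tfrac{m}{m+s_1+s_2}$ and $f s_2\,\tfrac{s_2}{m+s_1+s_2}$ together with $L$, but the stated result fortunately never needs those in explicit form.
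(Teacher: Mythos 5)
Your proposal is correct and follows essentially the same route as the paper: Pontryagin's minimum principle with the same Hamiltonian, the adjoint system $\dot\lambda_i=-\partial H/\partial x_i$, and pointwise minimization of the strictly convex, separable control terms $\tfrac{1}{2}\mu_i^2+\lambda_{i+2}\mu_i$ projected onto $\mu_i\ge 0$, yielding $\mu_1^*=\max\{0,-\lambda_3\}$ and $\mu_2^*=\max\{0,-\lambda_4\}$. The only cosmetic differences are that the paper writes out all four adjoint equations explicitly (which, as you note, the characterization never actually uses) and handles the constraint by the case split $\partial H/\partial\mu_i>0$ versus $\partial H/\partial\mu_i=0$ rather than by your projection phrasing; these are the same argument.
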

\begin{proof}
Here the Pontryagin's minimum principle is used to derive the necessary conditions on this problem. The Hamiltonian $H$ in this problem is

\begin{equation}
H=\frac{1}{2}f^2+ \frac{1}{2} \mu_1(t)^2+\frac{1}{2} \mu_2(t)^2+\lambda_1 f'+\lambda_2 m'+\lambda_3 s_1'+\lambda_4 s_2'
\end{equation}
The Hamiltonian is used to find the adjoint functions ($\lambda_i, i=1,2,3,4$),

\begin{equation}
\begin{split}
\lambda_1' =-\frac{\partial H}{\partial f}&=\lambda_1 \left(\delta + \frac{\beta m^2(f-K L)}{2 K (m + s_1 + s_2)} \right) -\lambda_2 \left( \frac{\beta L\left( m^2+2s_2^2\right)}{2(m + s_1 + s_2)} + \frac{\beta f (m^2+2s_2^2)}{2 K (m + s_1 + s_2)} \right) -f, \\
\lambda_2' =-\frac{\partial H}{\partial m}&= \lambda_1 \left(  \frac{\beta f m^2 L}{2(m + s_1 + s_2)^2} + \frac{\beta f m(m-2KL)}{2K(m + s_1 + s_2)} \right)         \\
&+\lambda_2 \left(\delta - \frac{\beta f m L}{m + s_1 + s_2} + \frac{\beta f L(m^2+2s_2^2)}{2(m + s_1 + s_2)^2} + \frac{\beta f \left(m^2+2s_2^2\right)}{2K(m + s_1 + s_2)} \right) \\
\lambda_3' =-\frac{\partial H}{\partial s_1}&= \lambda_1*\left( \frac{\beta f m^2 L}{2(m + s_1 + s_2)^2} + \frac{\beta f m^2}{2K(m + s_1 + s_2)} \right) \\
& +\lambda_2*\left( \frac{\beta f L \left(m^2+2s_2^2 \right)}{2(m + s_1 + s_2)^2} + \frac{\beta f (m^2+2s_2^2)}{2K(m + s_1 + s_2)}  \right)+\lambda_3 \delta \\
\lambda_4' =-\frac{\partial H}{\partial s_2}&= \lambda_1 \left( \frac{\beta f m^2 L}{2(m + s_1 + s_2)^2} + \frac{\beta f m^2}{2K(m + s_1 + s_2)} \right)\\
& + \lambda_2 \left( -\frac{2\beta f s_2 L}{m + s_1 + s_2} + \frac{\beta f L (m^2+2s_2^2)}{2(m + s1 + s2)^2 }+ \frac{\beta f (m^2+2s_2^2)}{2K(m + s_1 + s_2)}  \right) +\lambda_4 \delta \\
\end{split}
\end{equation}
To find the optimal $(\mu_1^*(t),\mu_2^*(t))$, minimize $H$ pointwise:

\begin{equation}
\frac{\partial H}{\partial \mu_1}=\mu_1+\lambda_3,\quad \frac{\partial H}{\partial \mu_2}=\mu_2+\lambda_4,
\end{equation}
Note that $\frac{1}{2}$ cancels with the 2 which comes from the square of the controls $\mu_1$ and $\mu_2$. Furthermore, the problem is indeed minimization as 

\begin{equation}
\begin{split}
\frac{\partial^2 H}{\partial \mu_1^2}=1>0,\\
\frac{\partial^2 H}{\partial \mu_2^2}=1>0.\\
\end{split}
\end{equation}
Hence the optimal solutions are

\begin{equation}
\begin{split}
\frac{\partial H}{\partial \mu_1}>0 & \rightarrow \mu_1^*(t)=0\\
\frac{\partial H}{\partial \mu_1}=0 & \rightarrow 0 \leq \mu_1^*(t)=-\lambda_3\\
\end{split}
\end{equation}

\begin{equation}
\begin{split}
\frac{\partial H}{\partial \mu_2}>0 & \rightarrow \mu_2^*(t)=0\\
\frac{\partial H}{\partial \mu_2}=0 & \rightarrow 0 \leq \mu_2^*(t)=-\lambda_4\\
\end{split}
\end{equation}
A compact way of writing the optimal control $(\mu_1^*(t),\mu_2^*(t))$ is

\begin{equation}
\mu_1^*(t)=\underset{t} {\max}\{0,-\lambda_3\},\quad \mu_2^*(t)=\underset{t} {\max}\{0,-\lambda_4\}.
\end{equation}
Now the proof is completed. 
\end{proof}

\subsection{Computational Methods}
The numerical simulations are investigated by MATLAB R2019b with the values of initial conditions and parameters shown in Table \ref{Lyu-tab2}. The ode15s solver was used to get numerical solutions of the combined TYC-SIT system. The TOMLAB Base Module and TOMLAB/SNOPT are also used to solve the optimal control problems of our dynamic systems.

\begin{table}[H] 
\caption{Initial Conditions. All parameters are non-negative.
\label{Lyu-tab2}}
\begin{tabular}{cc}
\toprule
\textbf{Initial Conditions}	& \textbf{Parameters}	\\
\midrule
$f(0)=100$		& $\beta=0.01$			\\
$m(0)=100$		&	$\delta=0.04$	\\
$s_1(0)=[0,10,20,50,100]$		&	$\mu_1=[0,5,10,15,20]$\\
$s_2(0)=[0,10,20,50,100]$		&	$\mu_2=[0,5,10,15,20]$\\
\bottomrule
\end{tabular}
\end{table}

\begin{remark}
The range of initial conditions and parameters were selected for purely  theoretical researches. The initial number of $f, m, s_1, s_2$ can represent thousands, tens of thousands, etc. And extinction in this manuscript is defined as the female population is less than 0.5.
\end{remark}

\section{Results}
The combined TYC-SIT system is modeled, and the initial conditions and parameters in Table \ref{Lyu-tab2} are utilized to observe the relative population decline of females ($f$) in response to the addition of the radiation-based sterile males and feminized $YY$ supermales.

Under the condition of  $\mu_1=0, \mu_2=0$, that is, sterile males or feminized $YY$ supermales are added to the population only once at time  $t=0$, and no additional will be introduced, it is observed that no matter how large the initial introduction of sterile males or feminized $YY$ supermales are, the system of the combined TYC-SIT cannot achieve extinction, and instead leads to an equilibrium state. As shown in Figure~\ref{fig-L2-1}, the population did decline for some time with large enough (i.e. purple/green star line) influx of sterile males or feminized $YY$ supermales, however, the population recovers soon and reaches the equilibrium state at approximately 172, which can be also calculated from equations \eqref{E-TYC_SIT_1}-\eqref{E-TYC_SIT_2}. Extinction can occur with continuous introducing modified males, two examples are provided in Figure \ref{fig-L2-2}.

\begin{figure}[H]
\includegraphics[width=10 cm , clip=true, trim=0 0 0 32]{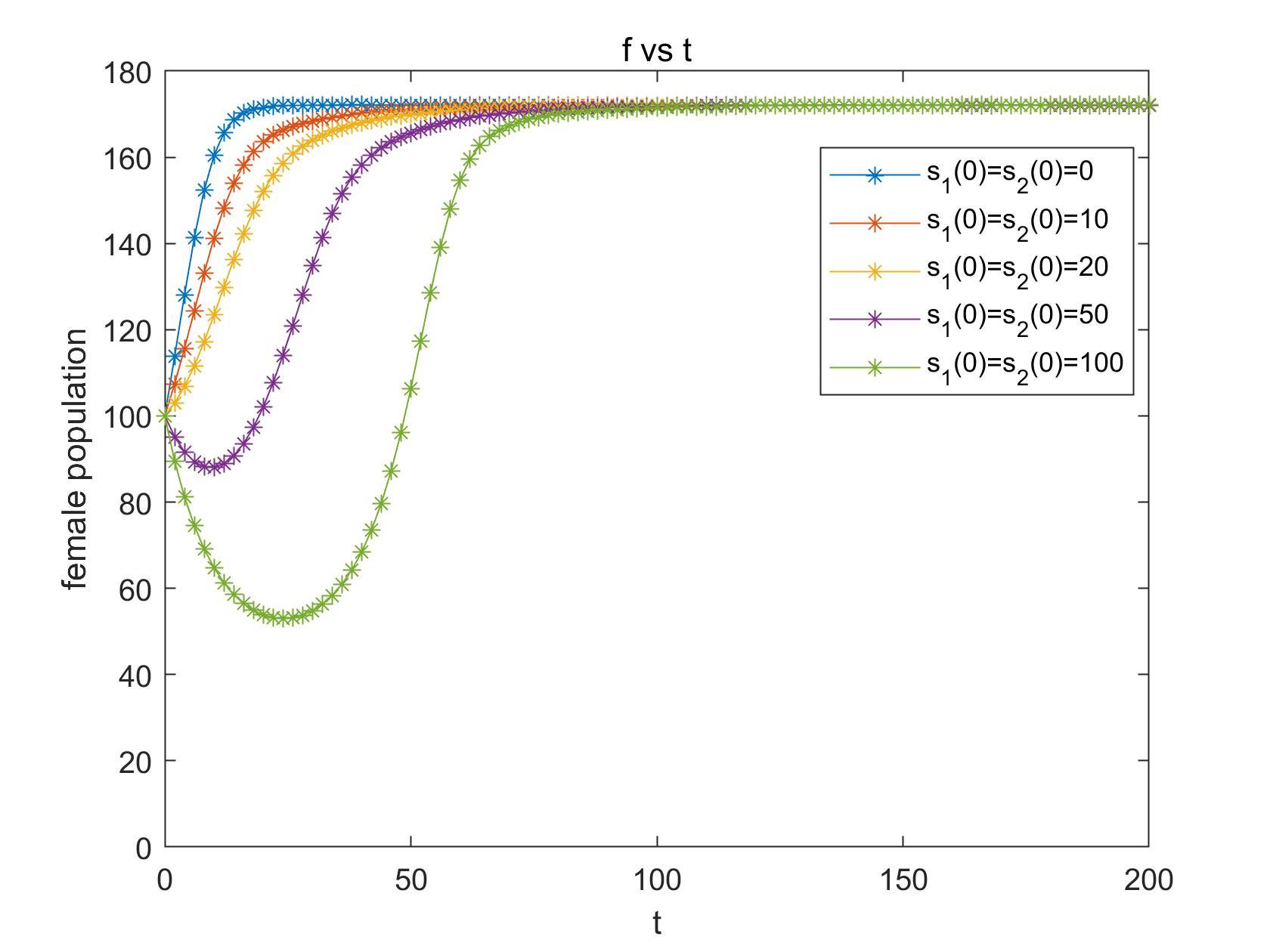}
\caption{The population change of females over time under $\mu_1=0$, $\mu_2=0$ with different number of initial introduction of sterile males and feminized $YY$ supermales. They all reach a stable state. }
\label{fig-L2-1}
\end{figure}

\begin{figure}[H]
\includegraphics[width=10 cm , clip=true, trim=0 0 0 32]{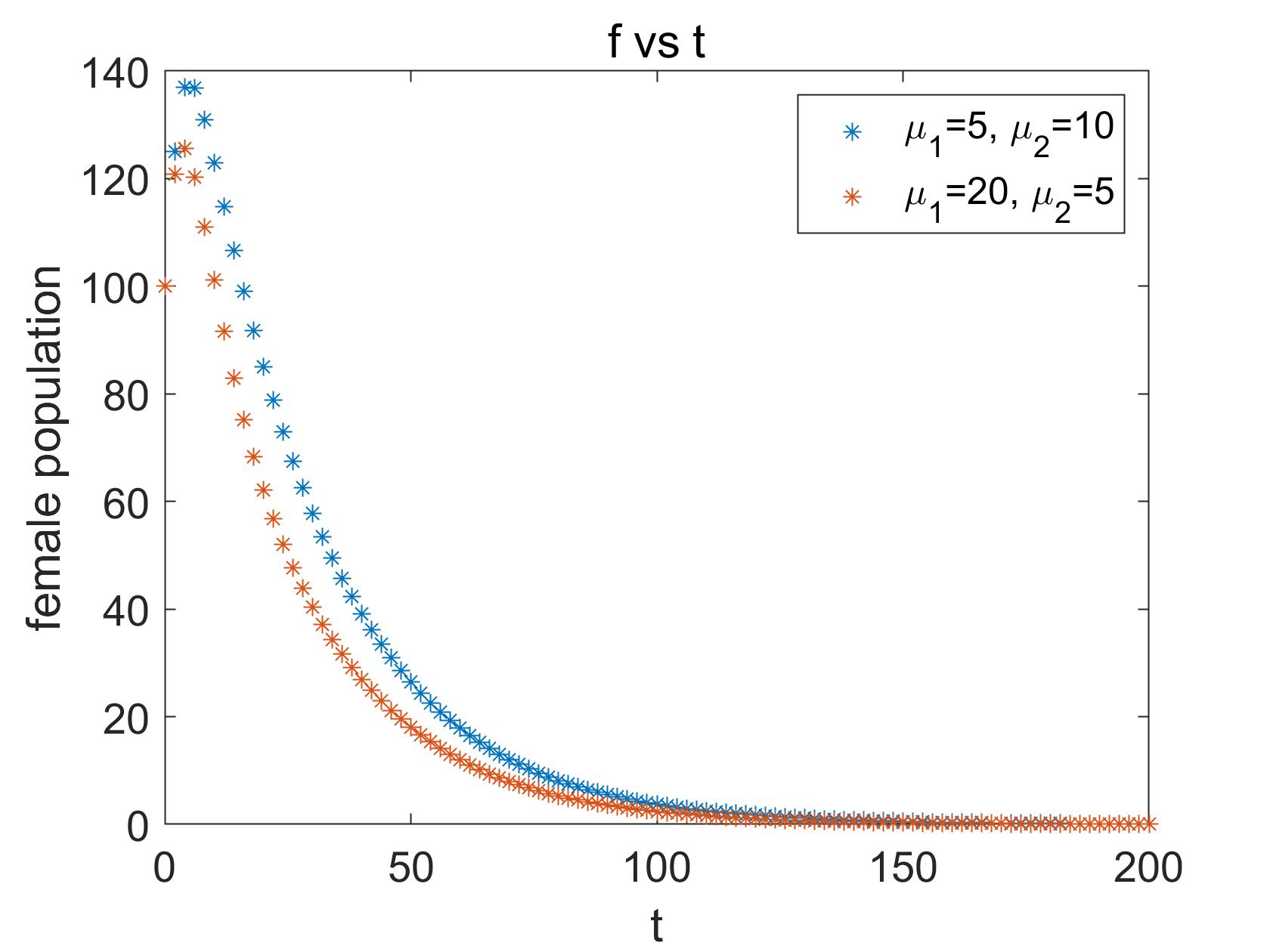}
\caption{The population change of females over time under continous constant introduction of sterile males and feminized $YY$ supermales. }
\label{fig-L2-2}
\end{figure}

If we solely continue adding sterile males (i.e. $\mu_2=0$) or solely continue adding feminized $YY$ supermales (i.e. $\mu_1=0$) after the initial introduction of the both, how and how many modified males would be introduced has a great impact on the population decline of mosquitoes. With a relatively low influx of modified males, purely introducing sterile males or feminized $YY$ supermales cannot drive the population to extinction, however, the effectiveness of the continuous introduction of $YY$ supermales is better because the population drops much more, see part (\textbf{a}) in Figure \ref{fig-L3}. As $\mu_1$ or $\mu_2$ is increased to 10, the initial decline rate with $\mu_2=10$ is faster than $mu_1=10$, but then the situation is reversed after the threshold point. As we can see in part (\textbf{c}) and (\textbf{d}) in Figure \ref{fig-L3}, the population decline rate with either $\mu_1=0$ or $\mu_2=0$ is similar if the influx of modified males is large enough. 

\begin{figure}[H]
\includegraphics[width=60mm]{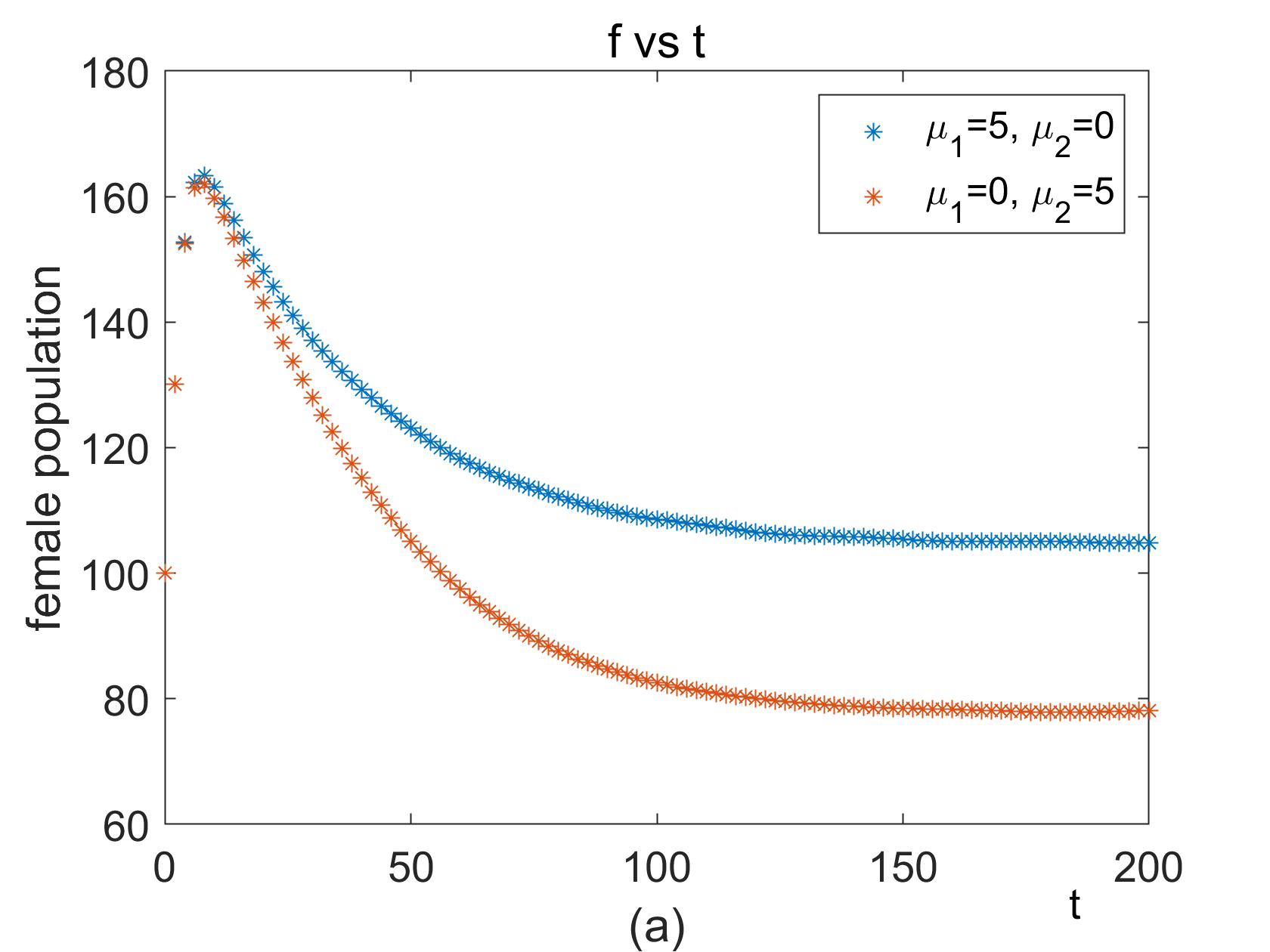}
\includegraphics[width=60mm]{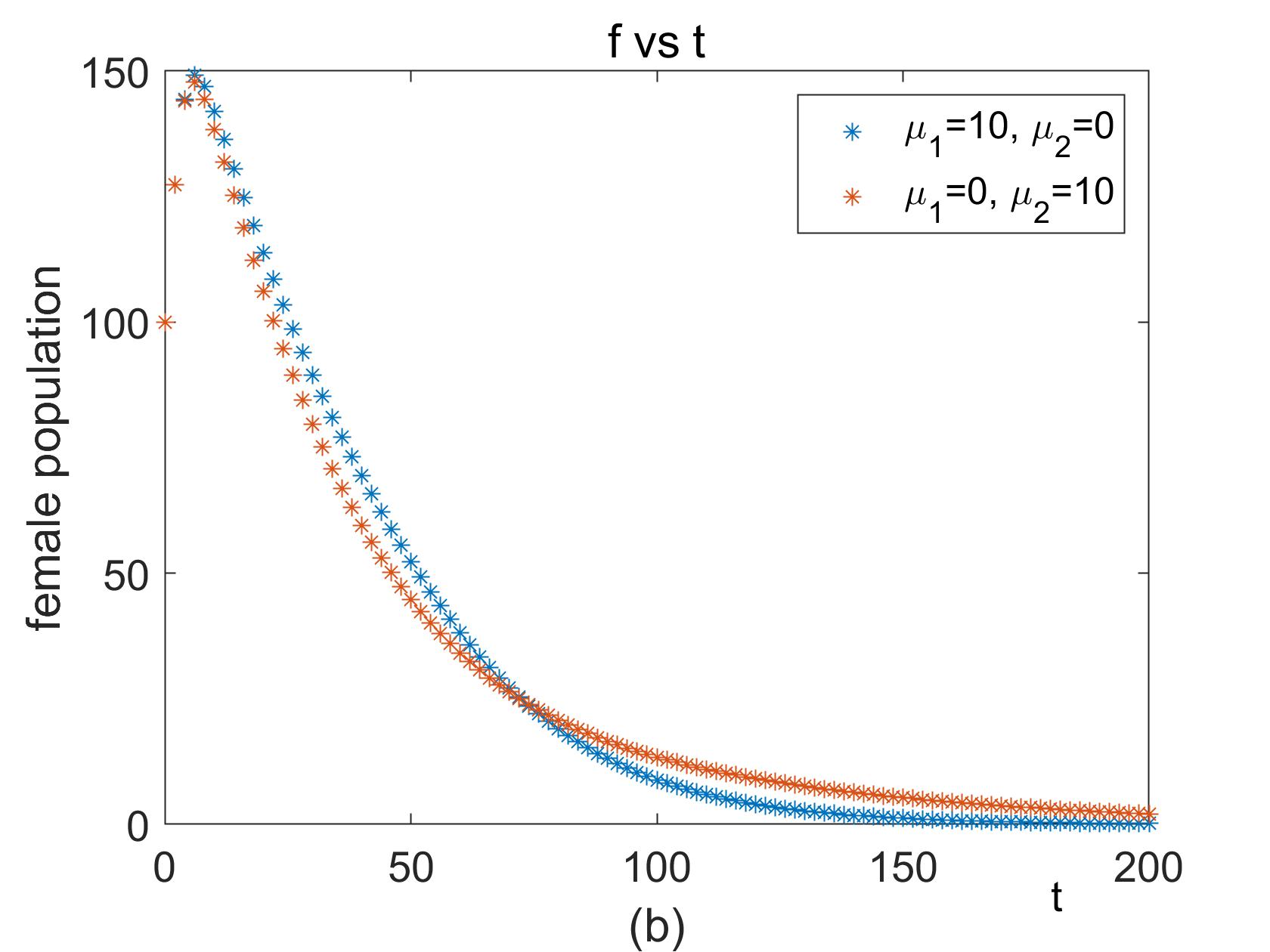} 
\includegraphics[width=60mm]{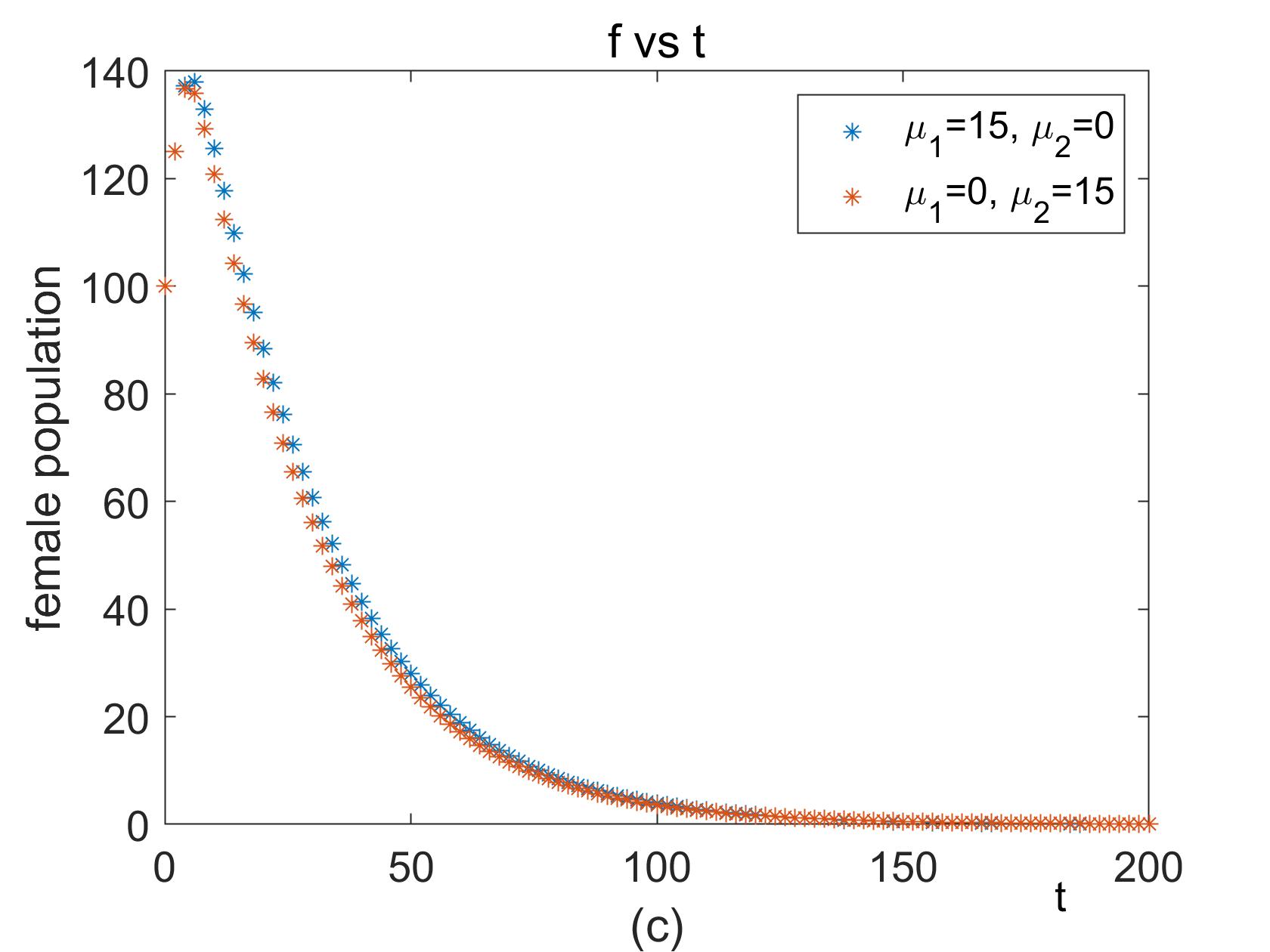}
\includegraphics[width=60mm]{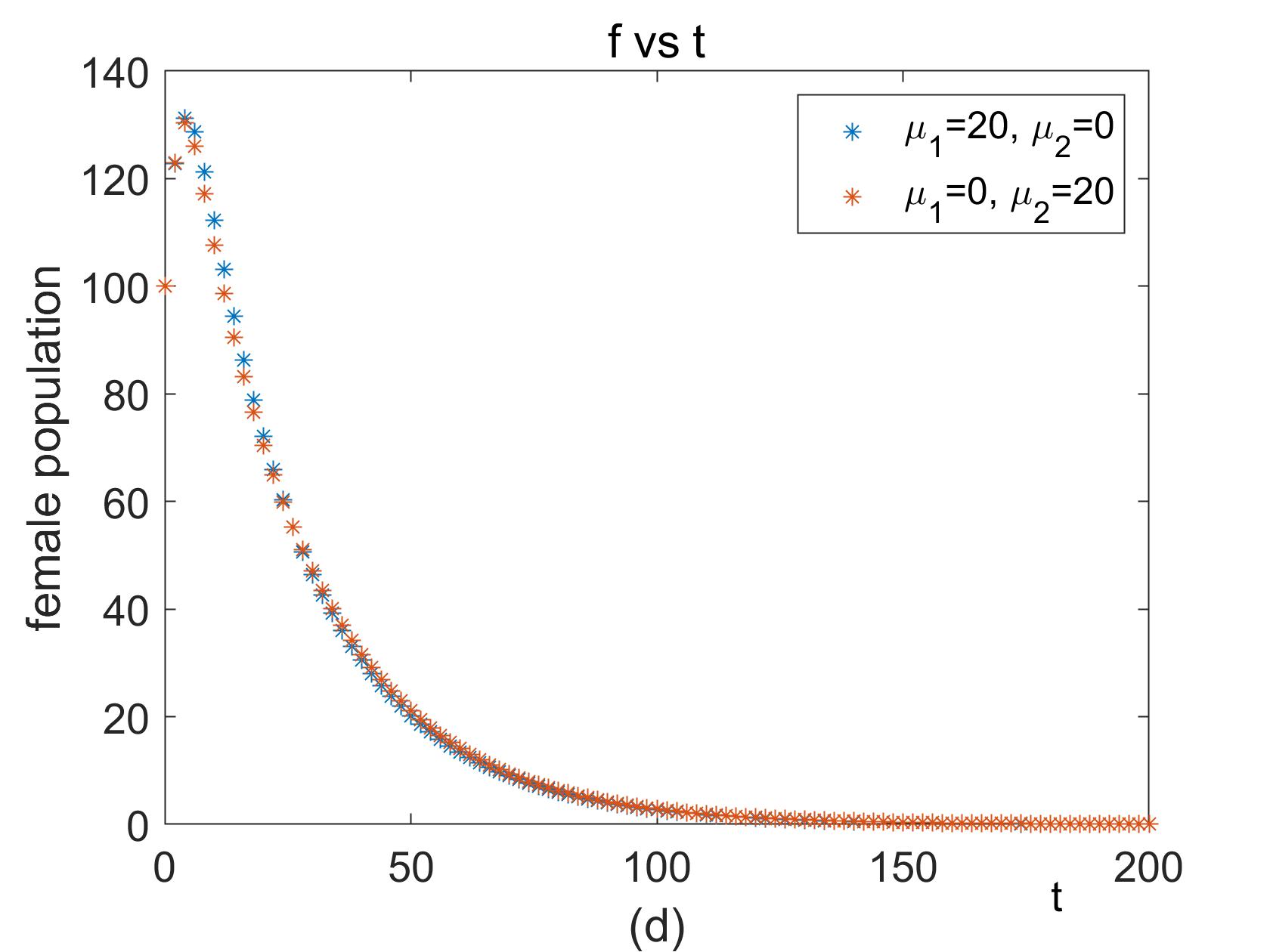}
\caption{Investigation of purely introducing sterile males or purely introducing feminized $YY$ supermales after the initial introduction of the both. Blue star line is under $\mu_2=0$, and the orange star line is under $\mu_1=0$. Here, $s_1(0)=s_2(0)=10.$ (\textbf{a}) Compared $\mu_1=5, \mu_2=0$ with $\mu_1=0, \mu_2=5$; (\textbf{b}) Compared $\mu_1=10, \mu_2=0$ with $\mu_1=0, \mu_2=10$; (\textbf{c}) Compared $\mu_1=15, \mu_2=0$ with $\mu_1=0, \mu_2=15$; (\textbf{d}) Compared $\mu_1=20, \mu_2=0$ with $\mu_1=0, \mu_2=20$. \label{fig-L3} }
\end{figure}

Considering the cost of the production of sterile males and feminized $YY$ supermales, optimal strategies and its corresponding optimal states have also been numerically simulated, see Figures~\ref{fig-OPT-1}-\ref{fig-OPT-2}. The population can be driven to extinction with an optimal time controls $\mu_1(t)$ or $\mu_2(t)$ by using the combined TYC-SIT approach. Figure~\ref{fig-OPT-1} shows the optimal strategy requires to introduce feminized $YY$ supermales at 20\% of the initial female population (or 5\% of the carrying capacity) for a short time to bring the population of females below some threshold, then gradually drops to a relatively low level after introducing sterile males at time $t=4$, and finally turns off until the entire population vanishes. Figure \ref{fig-OPT-2} describes the cost of implementing TYC-SIT approach varies as increasing time. After t=4,  the cost is decreasing.

\begin{figure}[H]	
\includegraphics[width=10 cm]{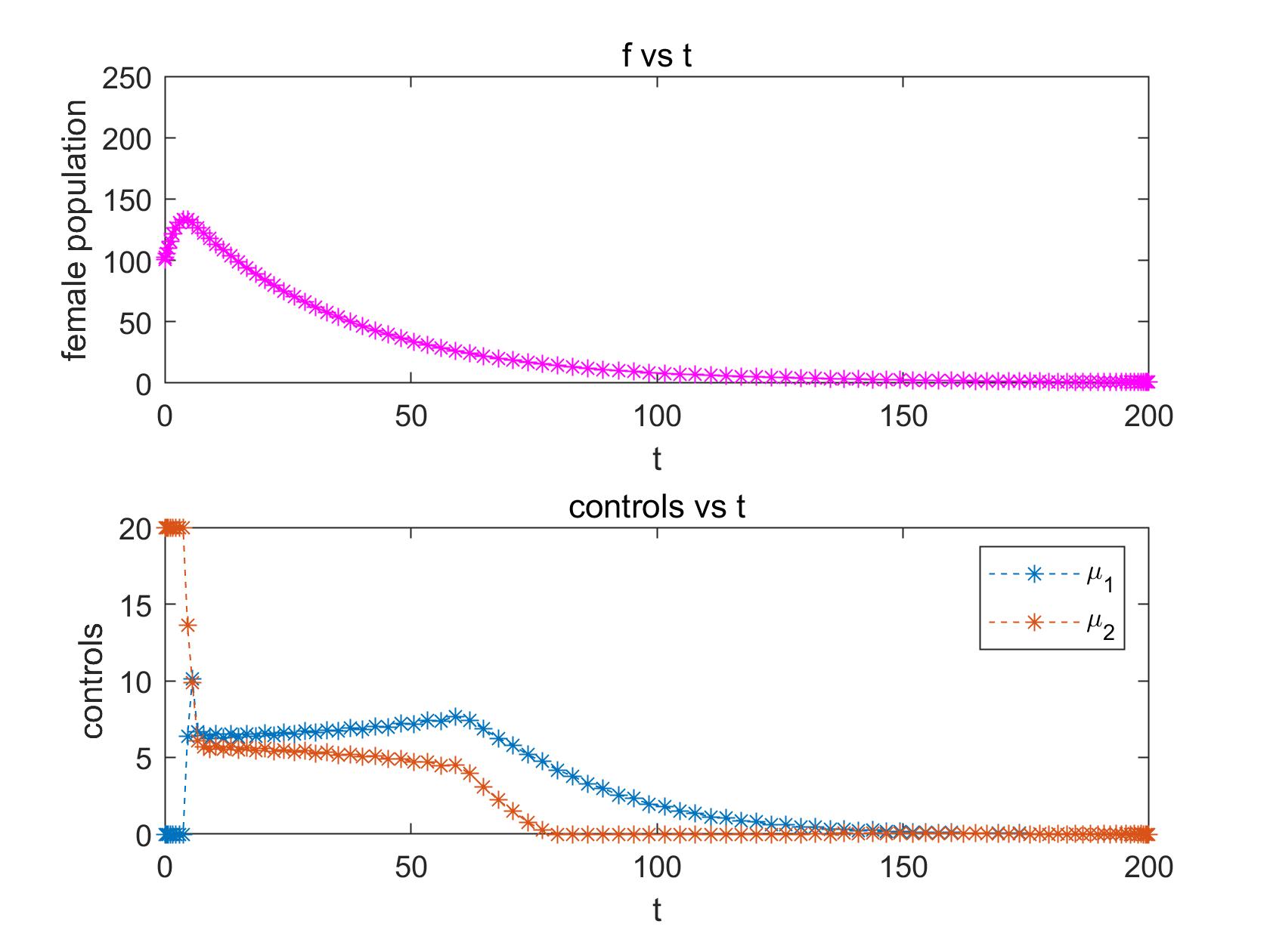}
\caption{The optimal control analysis of the combined TYC-SIT model. (\textbf{a}) Top panel: female population change with increasing time; (\textbf{b})Bottom panel: optimal controls $\mu_1$ and $\mu_2$ varies with time. \label{fig-OPT-1}}
\end{figure} 
\vspace{-2mm}
\begin{figure}[H]	
\includegraphics[width=10 cm]{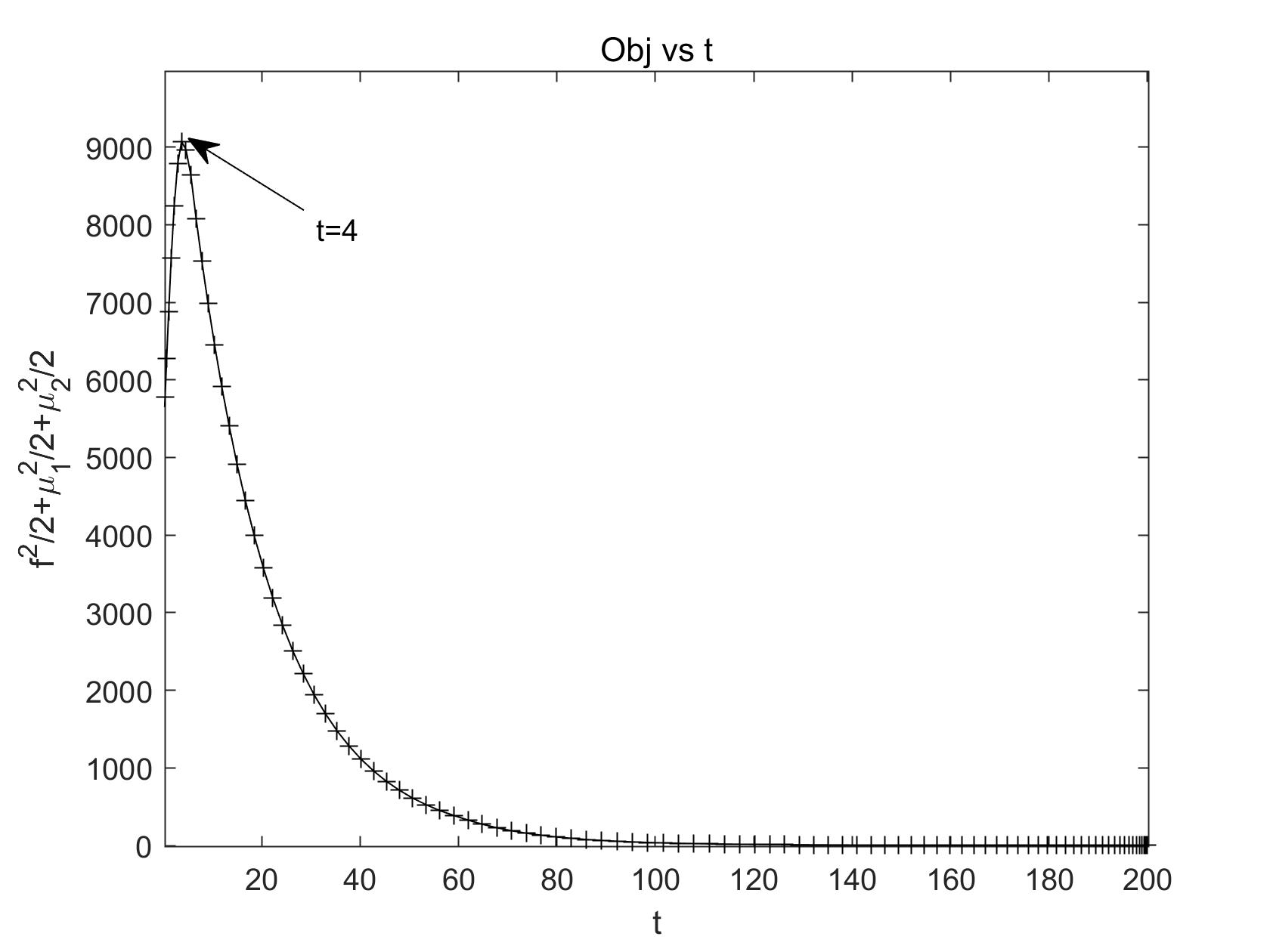}
\caption{Cost varies as increasing time with optimal controls. \label{fig-OPT-2}}
\end{figure}

\vspace{-3mm}
\section{Conclusions and Discussions}
In this manuscript, we established a mathematical model of the combined TYC and SIT approaches. The population was divided by the following four state variables: wild female mosquitoes($f$), wild male mosquitoes ($m$), radiation-based sterile males ($s_1$), and feminized $YY$ supermales ($s_2$). Six parameters, the birth coefficient $\beta$, the death coefficient $\delta$, the logistic term L, carrying capacity K, the influx of sterile males $\mu_1$, and the influx of sterile males $\mu_2$, are included. The intraspecies competition for female mates caused by the introduction of modified male mosquitoes is especially considered, which is omitted in many studies. The dynamic analysis and optimal control analysis of TYC-SIT model is important to understand the efficiency of this combined strategy to eliminate mosquitoes. Numerical simulations indicate the combined TYC-SIT approach can indeed eliminate mosquitoes. 

The combined TYC-SIT approach is safe because it is reversible and has the advantage of targeting a specific species, thus preserving other beneficial species. Furthermore, there is no genetically engineered genes can be transferred to subsequent generations. Also, the strength of the effect can be controlled because we can decide how many feminized $YY$ supermales and sterile males to be introduced to the population. Unlike other strategy, TYC-SIT does not depend on eliminating all wild matings to influence the total population. Instead, it depends on the gradual reduction in wild females over several generation cycles. These results have great significance in biological control of pests.\\

\noindent
{\bf Author Contributions:} Conceptualization, J.L., G.S. and W.S.; formal analysis, J.L.; investigation, J.L. and M.S.; writing---original draft preparation, J.L.; writing---review and editing, M.S. and W.S.; project administration, J.L., G.S. and W.S.; funding acquisition, J.L. and M.S. All authors have read and agreed to the published version of the manuscript.\\

\noindent
{\bf Funding:} This research was funded by the Key Laboratory of Pattern Recognition and Intelligent Information Processing, Institutions of Higher Education of Sichuan Province (Grant $\#$: MSSB-2021-09) and Talent Initiation Program of Chengdu University (Grant $\#$: 2081921002).\\

\noindent
{\bf Data Availability Statement:} The data used to support the findings of this study are available from the corresponding author upon request.\\

\noindent
{\bf Acknowledgements:}  The authors thank Dr. Chuan He for his support throughout this work. We are also grateful to two anonymous reviewers for their valuable suggestions.\\

\noindent
{\bf Conflicts of Interest:} The authors declare no conflict of interest.\\

\noindent
{\bf Abbreviations\\} 
\noindent
The following abbreviations are used in this manuscript:\\
\noindent
\begin{tabular}{@{}ll}
SIT & Sterile Insect Technique\\
TYC &Trojan Y Chromosome Strategy\\
ODEs &  Ordinary Differential Equations
\end{tabular}

\end{document}